\title{Fitting Tree Metrics with Minimum Disagreements} 
\author{Evangelos Kipouridis}{Saarland University, Germany \and Max Planck Institute for
Informatics, Germany}{kipouridis@cs.uni-saarland.de}{https://orcid.org/0000-0002-5830-5830}{}
\authorrunning{E. Kipouridis} 
\keywords{Hierarchical Clustering, Tree Metrics, Minimum Disagreements} 
\begin{document}

\maketitle

\begin{abstract}
In the $L_0$ Fitting Tree Metrics problem, we are given all pairwise distances among the elements of a set $V$ and our output is a tree metric on $V$.
The goal is to minimize the number of pairwise distance disagreements between the input and the output.
We provide an $O(1)$ approximation for $L_0$ Fitting Tree Metrics, which is asymptotically optimal as the problem is APX-Hard.

For $p\ge 1$, solutions to the related $L_p$ Fitting Tree Metrics have typically used a reduction to $L_p$ Fitting Constrained Ultrametrics.
Even though in FOCS '22 Cohen-Addad et al. solved $L_0$ Fitting (unconstrained) Ultrametrics within a constant approximation factor, their results did not extend to tree metrics.

We identify two possible reasons, and provide simple techniques to circumvent them.
Our framework does not modify the algorithm from Cohen-Addad et al. It rather extends any $\rho$ approximation for $L_0$ Fitting Ultrametrics to a $6\rho$ approximation for $L_0$ Fitting Tree Metrics in a blackbox fashion. 
\end{abstract}

\section{Introduction}
Trees are used by many disciplines to describe relationships between entities. For example, in biology, the universal tree of life describes evolutionary distances between organisms. In fact, trees are relevant for any historical science studying an evolutionary branching process (e.g. historical linguistics and sociocultural evolution).

In these cases, we are guaranteed that the underlying truth can be described by a tree.
This underlying tree may even have a special structure.
For example in machine learning and data analysis (see e.g.: \cite{ultrametricsMotivation}) it may be an ultrametric, that is a rooted tree with all leaves being at the same depth.
In any case, our access to this tree is usually only through estimations of pairwise distances.
A natural task is thus the reconstruction of the tree, given (noisy) measurements of pairwise distances.

As the noisy measurements may not describe a tree, we are interested in finding the ``closest'' tree to the input.
In this work we study the problem of minimizing the number of pairwise distance disagreements between the measurements and the output tree.
As noted in \cite{vincent}, this objective has a practical relevance; often the distances are obtained by different (human) classifiers. It is expected that most will do a good job, but if an error occurs, it may be by a large amount.

Other objectives have also been studied, e.g. minimizing the total error \cite{charikar, debarati, mcgregor}, or minimizing the maximum error \cite{agarwala}. In order to formally introduce a class of problems that captures all aforementioned objectives we first make some definitions.

\subsection{Problem Definitions}
Given a set $V$, we denote by $\binom{V}{2}$ the set of all (unordered) pairs of disjoint elements from set $V$.
We use the term distance matrix to refer to a function from $\binom{V}{2}$ to the non-negative reals.
Let $D$ be a distance matrix.
We slightly abuse notation and say that for any $u\in V$, $D(u,u)=0$.
For $p \ge 1$, we say that $\|D\|_p = \sqrt[p]{\sum_{\{u,v\} \in \binom{V}{2}}|D(u,v)|^p}$ is the $L_p$ norm of $D$.
We extend the notation for $p=0$.
In this case $\|D\|_0$ denotes the number of pairs $\{u,v\}$ such that $D(u,v)\ne 0$.
We even say $\|D\|_0$ is the $L_0$ norm of $D$, despite $L_0$ not being a norm.
For ease of notation, we use $0^0=0$, so that $x^0=0$ if $x=0$, and $1$ otherwise.
As in \cite{vincent} (and implicitly in \cite{agarwala}), we allow tree metrics and ultrametrics to have distances equal to $0$.

\begin{definition}
In the $L_p$ Fitting Tree (Ultra) Metrics problem, we are given as input a set $V$ 
and a distance matrix $D$.

The output is a tree metric (or ultrametric) $T$ that spans $V$ and fits $D$ in the sense of minimizing the $L_p$-norm
\begin{equation*}
\|T-D\|_p=\sqrt[p]{\sum_{\{u,v\}\in \binom{V}{2}} |T(u,v)-D(u,v)|^p}
\end{equation*}
\end{definition}

We also define a similar problem, $L_p$ Fitting Constrained Ultrametrics.
It was initially defined in \cite{agarwala}.
The authors proved that, for $p\ge 1$, a $\rho$ approximation for $L_p$ Fitting Constrained Ultrametrics translates to a $3\rho$ approximation for $L_p$ Fitting Tree Metrics.

\begin{definition}
In the $L_p$ Fitting Constrained Ultrametrics problem, we are given as input a set $V$,
a distance matrix $D$,
a distinguished element $\alpha \in V$,
a positive number $h$
and a positive number $l_u$ for each $u\in V$.
In particular it holds that $l_\alpha = h$.

The output is an ultrametric $U$ that spans $V$.
It shall also hold that
\begin{equation*}
\max\{l_u,l_v\} \le U(u,v) \le h \quad\quad \forall\{u,v\}\in \binom{V}{2}
\end{equation*}
$U$ shall fit $D$ in the sense of minimizing the $L_p$-norm
\begin{equation*}
\|U-D\|_p=\sqrt[p]{\sum_{\{u,v\}\in \binom{V}{2}} |U(u,v)-D(u,v)|^p}
\end{equation*}
\end{definition}

\subsection{Previous work}
When the input is a tree metric, a corresponding tree can be found in $O(|V|^2)$ time (linear in the input size) \cite{exact}. As this is usually not the case, research focused on $L_p$ Fitting Tree Metrics.

The first $L_p$ Fitting Tree Metrics problem solved within an asymptotically optimal approximation factor is the $L_\infty$ Fitting Tree Metrics problem, by Agarwala et al. \cite{agarwala}.
In order to solve it, the authors give a reduction to the $L_\infty$ Fitting Constrained Ultrametrics problem which increases the approximation by a factor $3$. They then use the exact solution of this problem from \cite{robust}.
In the same paper, they also show how to extend this reduction for any $L_p$ norm, $p\ge 1$.

This reduction turned out to be an essential tool for tackling $L_p$ Fitting Tree Metrics.
Harp, Kannan and McGregor \cite{mcgregor} developed an $O(\min\{n, k \log{n}\}^{1/p})$ approximation factor for $L_p$ Fitting Ultrametrics, $p\ge 1$, where $k$ is the number of distinct distances in the input.
Using the reduction from \cite{agarwala}, they extend their result to the $L_p$ Fitting Tree Metrics case\footnote{The authors erroneously claim that they get the same approximation for the closest tree metric problem. However, the known reduction may create $\omega(k)$ distinct distances. We believe that the dependence in $k$ is polynomial, which makes the approximation worse, but still non-trivial.}.
Similarly, Ailon and Charikar \cite{charikar} get an $O(((\log{n})(\log\log{n}))^{1/p})$ approximation for the ultrametrics case, which they then extend to the tree metrics case using the well established reduction.
Finally, Cohen-Addad et al. \cite{debarati} achieve an asymptotically optimal $O(1)$ approximation factor for $L_1$ Fitting Tree Metrics, again using an asymptotically optimal $O(1)$ approximation factor for the ultrametrics case.

In FOCS '22 Cohen-Addad et al. \cite{vincent} solved the $L_0$ Fitting Ultrametrics problem within an asymptotically optimal $O(1)$ approximation factor.
However, their result was not extended to the $L_0$ Fitting Tree Metrics problem.
We identify two possible reasons for that:
\begin{itemize}
    \item Most importantly, the reduction from \cite{agarwala} does not work for $L_0$. The reason is that a crucial step of it uses the convexity of all $L_p$-norms, $p\ge 1$. $L_0$ however is not convex (and in fact is not a norm).
    \item Even if the reduction worked for $L_0$, the algorithm for $L_0$ Fitting Ultrametrics should be extended to the $L_0$ Fitting Constrained Ultrametrics problem.
\end{itemize}

\subsection{Our results}
In this work we show how any $\rho$ approximation for $L_0$ Fitting Ultrametrics can be extended to a $6\rho$ approximation 
for $L_0$ Fitting Tree Metrics.

In particular, we extend the reduction from \cite{agarwala} to the $L_0$ case, despite $L_0$ not being convex.
We do so by avoiding the averaging argument from \cite{agarwala} which required convexity, and was necessary to prove the existence of a node with certain properties.
Our argument is of course only valid for $L_0$.

Furthermore, we show how one can use any algorithm for $L_0$ Fitting Ultrametrics to solve $L_0$ Fitting Constrained Ultrametrics, in a blackbox manner.
In contrast \cite{agarwala, charikar, mcgregor} all needed to apply ad-hoc modifications to their $L_p$ Fitting Ultrametrics algorithms to also solve $L_p$ Fitting Constrained Ultrametrics.

An immediate corollary of these two results is that the ultrametrics algorithm from \cite{vincent} can be used to get an asymptotically optimal $O(1)$ approximation factor for $L_0$ Fitting Tree Metrics.
Even though this constant is large, any improved approximation factor for $L_0$ Fitting Ultrametrics would immediately yield an improved approximation for $L_0$ Fitting Tree Metrics, using this framework.

Finally, we prove that $L_0$ Fitting Tree Metrics is APX-Hard.

It is interesting to notice that apart from avoiding the averaging argument from \cite{agarwala}, the rest follow existing techniques.
However, due to the special structure of $L_0$, we significantly simplify them.

\section{From tree metrics to ultrametrics}
In this section we prove the following result:

\begin{restatable}{theorem}{thmMain}
\label{thm:main}
A factor $\rho \ge 1$ approximation for $L_0$ Fitting Ultrametrics implies a factor $6 \rho$ approximation for $L_0$ Fitting Tree Metrics.
\end{restatable}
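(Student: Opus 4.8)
The plan is to derive Theorem~\ref{thm:main} by composing two reductions: (i) from $L_0$ Fitting Tree Metrics to $L_0$ Fitting Constrained Ultrametrics, at a loss of a factor $3$ (in the spirit of Agarwala et al.~\cite{agarwala}, but without their convexity argument), and (ii) from $L_0$ Fitting Constrained Ultrametrics to $L_0$ Fitting Ultrametrics, at a loss of a factor $2$. A factor-$\rho$ algorithm for ultrametrics then becomes a factor-$2\rho$ algorithm for constrained ultrametrics, and hence a factor-$6\rho$ algorithm for tree metrics.

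For reduction (ii) (the easy one): given a constrained instance with matrix $D$, lower bounds $\{l_u\}$ and upper bound $h$, I would run the ultrametrics algorithm on $D$ to get an ultrametric $U$ and then \emph{project} it into the feasible region, $\tilde U(u,v)=\min\{\,\max\{U(u,v),l_u,l_v\},\,h\,\}$. One checks that clipping an ultrametric from below by a per-vertex value and from above by a constant preserves the ultrametric inequality, so $\tilde U$ is a feasible constrained ultrametric; moreover $\tilde U(u,v)$ can differ from $D(u,v)$ only if $U(u,v)\ne D(u,v)$ or $D(u,v)$ already lies outside the box $[\max\{l_u,l_v\},h]$. The first event happens on at most $\rho\cdot\mathrm{OPT}_{\mathrm U}(D)$ pairs, and since the constrained optimum is itself a feasible unconstrained ultrametric, $\mathrm{OPT}_{\mathrm U}(D)\le\mathrm{OPT}_{\mathrm{CU}}$; the second event forces a disagreement for \emph{every} feasible constrained ultrametric, hence happens on at most $\mathrm{OPT}_{\mathrm{CU}}$ pairs. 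So $\|\tilde U-D\|_0\le(\rho+1)\,\mathrm{OPT}_{\mathrm{CU}}\le 2\rho\,\mathrm{OPT}_{\mathrm{CU}}$.

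For reduction (i), fix a pivot $\alpha\in V$ and a constant $M$ larger than all entries of $D$, and form the constrained instance $I_\alpha$ with $h=M$, $l_u=M-D(\alpha,u)$, and matrix $D'_\alpha(u,v)=M-\tfrac12\big(D(\alpha,u)+D(\alpha,v)-D(u,v)\big)$ (the shifted Gromov product around $\alpha$). Given a feasible constrained ultrametric $U$ for $I_\alpha$, the inverse transform $T(u,v)=2U(u,v)-l_u-l_v$ is a tree metric~\cite{agarwala}, and a one-line computation shows $T(u,v)-D(u,v)=2\big(U(u,v)-D'_\alpha(u,v)\big)$, so $\|T-D\|_0=\|U-D'_\alpha\|_0$: disagreements are preserved \emph{exactly}. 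The algorithm loops over all $|V|$ pivots $\alpha$, solves each $I_\alpha$ with the (factor-$2\rho$) constrained-ultrametrics routine of reduction (ii), transforms back, and outputs the cheapest tree metric. What remains is to exhibit one pivot $\alpha^\ast$ with $\mathrm{OPT}(I_{\alpha^\ast})\le 3\,\mathrm{OPT}_{\mathrm{tree}}$.

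This last point is the main obstacle, and the one place where the averaging/convexity argument of \cite{agarwala} has to be replaced. Let $T^\ast$ be an optimal tree metric and call a pair \emph{bad} if $T^\ast$ and $D$ disagree on it. Applying the forward Gromov transform to $T^\ast$ and clipping from below by the lower bounds of $I_\alpha$ yields a feasible constrained ultrametric $\tilde U$; unwinding the definitions — and using the metric inequalities of $T^\ast$ to see that the relevant Gromov products of $T^\ast$ and $D$ coincide and are dominated by the $l$-values — shows that $\tilde U$ agrees with $D'_\alpha$ on every pair $\{u,v\}$ for which none of $\{u,v\},\{\alpha,u\},\{\alpha,v\}$ is bad. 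Now, instead of averaging $L_p$ costs over $\alpha$, I simply count bad pairs: $\sum_{\alpha}\#\{\text{bad pairs through }\alpha\}=2\,\mathrm{OPT}_{\mathrm{tree}}$, so some pivot $\alpha^\ast$ is incident to at most $2\,\mathrm{OPT}_{\mathrm{tree}}/|V|$ bad pairs. For that pivot, $\tilde U$ can disagree with $D'_{\alpha^\ast}$ only on the $\le\mathrm{OPT}_{\mathrm{tree}}$ bad pairs themselves and on the $\le(2\,\mathrm{OPT}_{\mathrm{tree}}/|V|)\cdot|V|=2\,\mathrm{OPT}_{\mathrm{tree}}$ pairs incident to an element that forms a bad pair with $\alpha^\ast$, so $\mathrm{OPT}(I_{\alpha^\ast})\le\|\tilde U-D'_{\alpha^\ast}\|_0\le 3\,\mathrm{OPT}_{\mathrm{tree}}$. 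This purely combinatorial step is exactly what fails for $p\ge 1$ (where the magnitudes of the errors matter) and works only for $p=0$; every other ingredient is a simplification of known arguments.
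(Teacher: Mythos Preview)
Your proposal is correct and follows the same two-step architecture as the paper (a factor-$3$ reduction to constrained ultrametrics via a pivot, then a factor-$2$ reduction to unconstrained ultrametrics), and your replacement of the convexity/averaging step by the double-counting argument $\sum_\alpha \#\{\text{bad pairs through }\alpha\}=2\,\mathrm{OPT}_{\mathrm{tree}}$ is exactly the paper's Lemma~\ref{lem:modifiedReduction}, just phrased in the Gromov/ultrametric coordinates rather than in terms of $\alpha$-restricted trees.

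The one place you genuinely deviate is reduction~(ii): the paper first \emph{squeezes} $D$ to $S_D$, runs the ultrametric algorithm on $S_D$, and then squeezes the output; you instead run the ultrametric algorithm on $D$ itself and clip only the output. Your analysis (disagreements of $\tilde U$ with $D$ are contained in $\{U\ne D\}\cup\{D\notin[\max\{l_u,l_v\},h]\}$, of sizes $\le\rho\,\mathrm{OPT}_{\mathrm{CU}}$ and $\le\mathrm{OPT}_{\mathrm{CU}}$ respectively) is valid and in fact slightly sharper, yielding $(\rho+1)\,\mathrm{OPT}_{\mathrm{CU}}$ before relaxing to $2\rho$. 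Either route works; yours is a bit shorter, while the paper's ``squeeze the input first'' viewpoint mirrors more closely the $L_p$ argument of~\cite{debarati} that it is simplifying.
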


Let $D$ be a distance matrix, $\alpha\in V$ be a distinguished element and $T$ be a tree spanning $V$.
In more details, there exists a function mapping elements from $V$ to nodes in $T$.
If element $u\in V$ is mapped to node $u'\in T$, we say that $u$ is associated with $u'$.
We even say ``node $u$'' to refer to the node associated with $u$.
We note that $T$ may also have auxiliary nodes, without any element from $V$ being mapped to them.

We say $T$ is an $\alpha$-restricted tree if the distance from $\alpha$ to any other element $u$ is the same both in $T$ and in $D$.

Given a tree $T$ we can obtain an $\alpha$-restricted tree $T^{/\alpha}$ by modifying $T$ as in Figure~\ref{fig:aRestricted}. We say that $T^{/\alpha}$ is the $\alpha$-restricted tree of $T$.

\begin{figure}[ht]
    \centering
    \includegraphics[width=320pt, height=150pt]{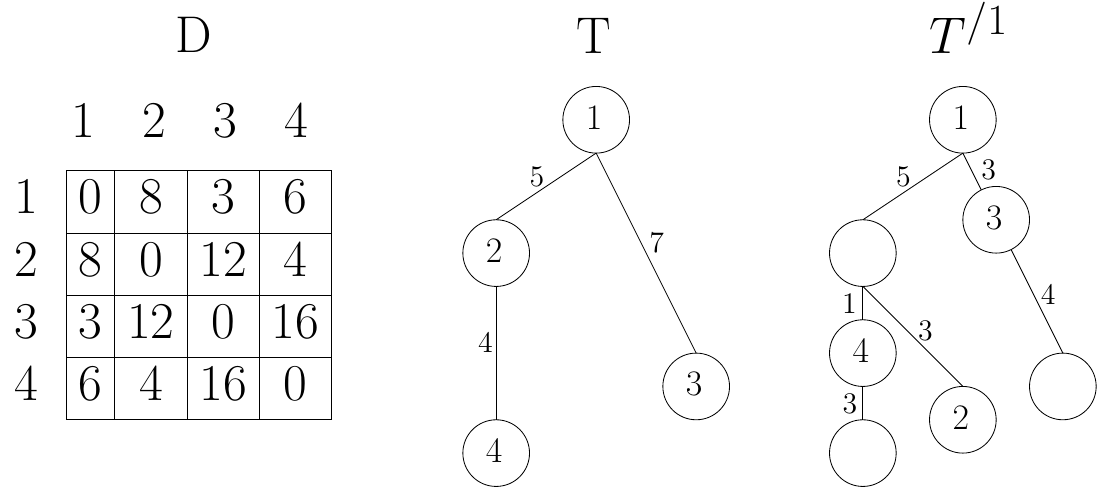}
    \caption{$T$ is not $\alpha$-restricted, for any $\alpha \in \{1,2,3,4\}$. By modifying $T$ we get $T^{/1}$ which is $1$-restricted. Nodes $3$ and $4$ move towards $1$, while $2$ moves away from $1$ (by creating a new leaf). Notice that some nodes of $T$ may be irrelevant for $T^{/1}$; however we do not need to explicitly delete them.}
    \label{fig:aRestricted}
\end{figure}

Intuitively, for any element $u$, if $T(\alpha,u) \ne D(\alpha,u)$, we move $u$ either closer to or further from $\alpha$.
More specifically, if $T(\alpha,u)>D(\alpha,u)$, then:
\begin{itemize}
    \item if there exists a node in the path from $u$ to $\alpha$ at distance $D(\alpha,u)$ from $\alpha$, we associate $u$ with this node,
    \item else there exists an edge in the path from $u$ to $\alpha$ such that one of its endpoints is at distance less than $D(\alpha, u)$ from $\alpha$, and the other endpoint is at distance larger than $D(\alpha,u)$ from $\alpha$.
    In this case we subdivide this edge in order to introduce a new node at distance exactly $D(\alpha,u)$ from $\alpha$.
    Then we associate $u$ with this new node.
\end{itemize}
Else if $D(\alpha,u) > T(\alpha,u)$ we create a new leaf under the node previously associated with $u$; the length of the edge connecting them is $D(\alpha,u)-T(\alpha,u)$.
Then we associate $u$ with the newly created leaf, instead of its parent.

The proof strategy for our main result is the following:

\begin{itemize}
    \item In order to approximate the optimal tree, it suffices to approximate the optimal $\alpha$-restricted tree, for some $\alpha\in V$. We prove this in Section~\ref{sec:tree}.
    \item In order to approximate the optimal $\alpha$-restricted tree, it suffices to approximate $L_0$ Fitting Constrained Ultrametrics. The proof directly follows from \cite{agarwala}; we include it in Appendix~\ref{app} for completeness.
    \item In order to approximate $L_0$ Fitting Constrained Ultrametrics, it suffices to approximate $L_0$ Fitting Ultrametrics. We prove this in Section~\ref{sec:ultrametric}.
\end{itemize}

\subsection{From tree metrics to restricted tree metrics} \label{sec:tree}

The main point where the reduction from \cite{agarwala} breaks for $L_0$ is the connection between optimal tree metrics and optimal $\alpha$-restricted tree metrics. The reason is that an averaging argument used to prove the result uses the convexity of $L_p$-norms, $p\ge 1$. In fact, this argument shows the existence of an element $\alpha$ such that the optimal $\alpha$ restricted tree is very close to the optimal tree. Then one can try all elements $\alpha$ and pick the best of them.

We show that even though $L_0$ is not convex, we can still select $\alpha$ in the same way as in \cite{agarwala}. In particular, we show that for any tree $T$, there exists an element $\alpha$ such that its $\alpha$-restricted tree metric $T^{/\alpha}$ does not increase the cost by more than a factor $3$.

\begin{lemma} \label{lem:modifiedReduction}
Let $D$ be a distance matrix and $T$ be a tree. Then there exists an element $\alpha$ such that for the $\alpha$-restricted tree $T^{/a}$ of $T$ it holds that $\|D-T^{/a}\|_0\le 3 \|D-T\|_0$.
\end{lemma}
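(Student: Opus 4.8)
The plan is to find $\alpha$ by a counting argument over the pairs whose distance gets disturbed by the restriction operation. For a fixed tree $T$, call a pair $\{u,v\}$ \emph{bad} if $T(u,v)\ne D(u,v)$; by definition $\|D-T\|_0$ is the number of bad pairs. Fix a candidate element $\alpha$. When we pass from $T$ to $T^{/\alpha}$, the only pairs whose tree-distance can change are those incident to some element $u$ that was moved, i.e. some $u$ with $T(\alpha,u)\ne D(\alpha,u)$ --- equivalently, some $u$ for which $\{\alpha,u\}$ is bad. So the ``newly disturbed'' pairs in $T^{/\alpha}$ are all contained in $\{\,\{u,v\}: \{\alpha,u\}\text{ bad or }\{\alpha,v\}\text{ bad}\,\}$. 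Crucially, in $T^{/\alpha}$ every pair $\{\alpha,u\}$ is now \emph{correct} (that is the whole point of $\alpha$-restriction), so those bad pairs are ``repaired'' and only pay for collateral damage on the other end.

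The key quantitative step is to bound, for the right choice of $\alpha$, the number of pairs $\{u,v\}$ with $\{\alpha,u\}$ or $\{\alpha,v\}$ bad. Let $b(u)$ denote the number of elements $w$ with $\{u,w\}$ bad --- the ``bad degree'' of $u$ in the graph of bad pairs. Then $\sum_u b(u) = 2\|D-T\|_0$, so the average bad degree is $2\|D-T\|_0/|V|$, and there exists $\alpha$ with $b(\alpha)\le 2\|D-T\|_0/|V|$. For this $\alpha$, the set of $u$ that get moved has size at most $b(\alpha)$, and each such $u$ can disturb at most $|V|-1<|V|$ pairs, so the number of pairs that are bad in $T^{/\alpha}$ but were good in $T$ is at most $b(\alpha)\cdot|V| \le 2\|D-T\|_0$. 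Meanwhile every pair that was bad in $T$ and is still bad in $T^{/\alpha}$ was already counted in $\|D-T\|_0$ --- and in fact all the bad pairs of the form $\{\alpha,\cdot\}$ are now good --- so
\begin{equation*}
\|D-T^{/\alpha}\|_0 \;\le\; \|D-T\|_0 \;+\; 2\|D-T\|_0 \;=\; 3\|D-T\|_0 .
\end{equation*}

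The main obstacle, and the place requiring care, is verifying that moving a single element $u$ (whether pulling it inward onto an existing node / a subdivided edge, or pushing it outward onto a fresh pendant leaf) only ever changes the distances $T(u,\cdot)$ and not the distance between two elements both of which stay put --- i.e. that the restriction operation really is ``local'' to the moved elements. This is where one uses the precise description of the construction in Figure~\ref{fig:aRestricted}: pulling $u$ inward just reassociates $u$ with a point already on the $u$--$\alpha$ path (possibly after subdividing an edge, which does not alter any existing pairwise distances), and pushing $u$ outward hangs a new leaf below $u$'s old node, again leaving all pre-existing points' mutual distances intact. Once this locality is established, the counting above is immediate, and choosing $\alpha$ to minimize (or merely below-average) bad degree finishes the proof. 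I would also remark that the factor $3$ here is exactly the quantity that, composed with the later reductions, yields the $6\rho$ in Theorem~\ref{thm:main}.
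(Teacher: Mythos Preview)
Your proposal is correct and is essentially the same proof as in the paper: both pick $\alpha$ to minimize the ``bad degree'' $b(\alpha)=\|D(\alpha)-T(\alpha)\|_0$, use the handshake identity $\sum_u b(u)=2\|D-T\|_0$ to get $b(\alpha)\le 2\|D-T\|_0/n$, and then split the disagreements of $T^{/\alpha}$ into those touching a moved element (at most $(n-1)b(\alpha)\le 2\|D-T\|_0$) and those between two unmoved elements (at most $\|D-T\|_0$, by the locality of the restriction). The only cosmetic difference is that the paper partitions pairs by ``at least one endpoint moved'' versus ``both endpoints unmoved'', whereas you partition by ``newly bad'' versus ``still bad''; the arithmetic and the underlying structural observation are identical.
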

\begin{proof}
We simply let $\alpha$ be the element that minimizes disagreements, that is 
\[\alpha = argmin_{u\in V} \|D(u)-T(u)\|_0\]
where $D(u)$ is the distance matrix $D$ restricted on the pairs containing $u$ (similarly for $T(u)$).
We have that $\|D-T\|_0 = \frac12 \sum_{u\in V} \|D(u)-T(u)\|_0$, as the sum in the right hand side double counts every pair $u,v$ with $D(u,v) \ne T(u,v)$.
By definition of $\alpha$, every term in the sum is lower bounded by $\|D(\alpha)-T(\alpha)\|_0$, meaning that 
\[\|D-T\|_0 \ge \frac{n}2 \|D(\alpha)-T(\alpha)\|_0\]

We say that an element $u$ is good if $D(\alpha,u)=T(\alpha,u)$, and bad otherwise; notice that by definition, the number of bad elements is exactly $\|D(\alpha)-T(\alpha)\|_0$.
As any element $u$ can have at most $n-1$ disagreements in $T^{/\alpha}$ (that is $\|D(u)-T^{/\alpha}(u)\|_0 \le n-1$), it follows that the number of pairs $u,v$ with at least one of $u,v$ being bad and $D(u,v) \ne T^{/\alpha}(u,v)$ is at most $\|D(\alpha)-T(\alpha)\|_0 \cdot (n-1)\le 2\|D-T\|_0$.

On the other hand, notice that if both $u,v$ are good, then by construction $T^{/\alpha}(u,v) = T(u,v)$. Therefore, if $D(u,v)\ne T^{/\alpha}(u,v)$, it also holds that $D(u,v)\ne T(u,v)$. The number of such pairs is upper bounded by $\|D-T\|_0$.
\end{proof}

Letting $T$ be an optimal solution for $L_0$ Fitting Tree Metrics establishes that it suffices to approximate the optimal $\alpha$-restricted tree, only increasing the approximation by a factor $3$.

Furthermore, we can directly use the techniques from \cite{agarwala} to show that any approximation for the constrained ultrametrics problem can give the exact same approximation for the optimal $\alpha$-restricted tree. We include this proof in Appendix~\ref{app} for completeness, as it is itself very brief. However we do not include it in the main body of the paper, as it has been extensively used in the literature.

Notice that these results already show that a $\rho$ approximation for $L_0$ Fitting Constrained Ultrametrics translates to a $3\rho$ approximation for $L_0$ Fitting Tree Metrics. This is an extension of the result of \cite{agarwala} for the $L_0$ case.

\subsection{From constrained ultrametrics to ultrametrics} \label{sec:ultrametric}
In this section we show that it is sufficient to approximate $L_0$ Fitting Ultrametrics, which is more natural than $L_0$ Fitting Constrained Ultrametrics. The technique used follows the one used in \cite{debarati} for $L_p$, $p\in \{1,2,\ldots\} \cup \{\infty\}$. However, in the case of $L_0$ we can simplify.

The high-level view of the technique is the following: Instead of trying to find a constrained ultrametric close to a distance matrix $D$, we rather squeeze $D$ itself to obey the constraints. Let $S_D$ be the resulting distance matrix. Then we find an (unconstrained) ultrametric $U$ close to this matrix $S_D$; due to the extra structure we imposed on $S_D$, we can only improve $U$ if we again squeeze it to obey the constraints. The resulting ultrametric $S_U$ is a constrained ultrametric, but all we needed to obtain it was a black-box algorithm for general ultrametrics.

We now define the squeezing process more formally.
In the $L_0$ Fitting Constrained Ultrametrics problem, for every element $u$ we are given a value $l_u$ which we call $u$'s lower-bound.
Furthermore we are given an upper bound $h$.

A constrained ultrametric $U$ shall satisfy that 
\begin{align*}
    h\geq U(u,v)\geq \max\{l_u,l_v\} \quad\quad \forall \{u,v\}\in \binom{V}{2}
\end{align*}

Given a distance matrix $A$, we define the \emph{squeezed} $A$ as the distance matrix $S_A$ for which $S_A(u,v)=\min\{ h,\max\{D(u,v),l_u,l_v\}\}$, for all $\{u,v\}\in \binom{V}{2}$. Intuitively, $S_A$ is obtained by squeezing $A$ to fit the constraints.

We use the well-known characterization of ultrametrics, that $U$ is an ultrametric iff $\forall \{u,v,w\} \in \binom{V}{3}: U(u,v) \le \max\{U(u,w),U(v,w)\}$.

\begin{lemma} \label{lem:ultrametricToConstrainedUltrametric}
A factor $\rho \ge 1$ approximation for $L_0$ Fitting Ultrametrics implies a factor $2 \rho$ approximation for $L_0$ Fitting Constrained Ultrametrics.
\end{lemma}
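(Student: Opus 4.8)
The plan is to implement the "squeeze" strategy outlined just before the statement. Let $D$ be the input distance matrix together with the lower bounds $\{l_u\}$ and upper bound $h$, and let $S_D$ be the squeezed version defined by $S_D(u,v)=\min\{h,\max\{D(u,v),l_u,l_v\}\}$. I would run the given $\rho$-approximation for $L_0$ Fitting Ultrametrics on $S_D$, obtaining an ultrametric $U$ with $\|U-S_D\|_0 \le \rho\cdot\mathrm{OPT}_{\mathrm{ultra}}(S_D)$, and then output $S_U$, the squeezed version of $U$. The proof then splits into three parts: (i) $S_U$ is a feasible constrained ultrametric; (ii) squeezing never increases the number of disagreements with $S_D$, i.e. $\|S_U-S_D\|_0 \le \|U-S_D\|_0$; and (iii) relating costs back to the original constrained problem, namely $\mathrm{OPT}_{\mathrm{ultra}}(S_D) \le \mathrm{OPT}_{\mathrm{constr}}(D)$ and $\|S_U-D\|_0 \le 2\|S_U-S_D\|_0 + (\text{something bounded by }\mathrm{OPT}_{\mathrm{constr}})$ — I expect the clean bound to come out as $\|S_U-D\|_0 \le \|S_U-S_D\|_0 + \mathrm{OPT}_{\mathrm{constr}}(D)$, giving $2\rho$ overall.

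For part (i), feasibility: I need that squeezing preserves the ultrametric inequality. Since $x\mapsto \min\{h,\max\{x,c\}\}$ is monotone for each fixed threshold, and since the per-pair clamp involves $\max\{l_u,l_v\}$ which interacts nicely with the three-point condition, I would check directly that for any triple $\{u,v,w\}$, $S_U(u,v)\le\max\{S_U(u,w),S_U(v,w)\}$ follows from the same inequality for $U$ — the $h$-cap is uniform so it is harmless, and the lower clamp at $\max\{l_u,l_v\}$ on the left side is dominated because whichever of the two right-hand pairs shares the larger-$l$ endpoint already gets clamped at least as high. Feasibility of the bounds $\max\{l_u,l_v\}\le S_U(u,v)\le h$ is immediate from the definition of squeezing. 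Here the $L_0$ setting is what makes life easy: for general $L_p$ one needs the actual values to move in the right direction, whereas for $L_0$ only the disagreement pattern matters.

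For part (ii), the key observation is pairwise and local: fix a pair $\{u,v\}$ and let $\sigma(x)=\min\{h,\max\{x,l_u,l_v\}\}$ be the squeeze map for this pair; it is idempotent, so $S_D(u,v)=\sigma(D(u,v))$ and $S_U(u,v)=\sigma(U(u,v))$. If $S_U(u,v)\ne S_D(u,v)$ then $\sigma(U(u,v))\ne\sigma(S_D(u,v))$, which forces $U(u,v)\ne S_D(u,v)$. Hence every disagreement of $S_U$ against $S_D$ is already a disagreement of $U$ against $S_D$, proving $\|S_U-S_D\|_0\le\|U-S_D\|_0$.

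For part (iii) and the final assembly: any feasible constrained ultrametric $U^*$ already satisfies the constraints, so $\sigma_{\{u,v\}}(U^*(u,v))=U^*(u,v)$ for every pair; thus if $U^*(u,v)=D(u,v)$ then $U^*(u,v)=S_D(u,v)$, which gives $\|U^*-S_D\|_0\le\|U^*-D\|_0$ and therefore $\mathrm{OPT}_{\mathrm{ultra}}(S_D)\le\mathrm{OPT}_{\mathrm{constr}}(D)$. For the other direction I would bound $\|S_U-D\|_0\le\|S_U-S_D\|_0+\|S_D-D\|_0$ and note that a pair with $S_D(u,v)\ne D(u,v)$ is precisely a pair where the clamp was active, i.e. $D(u,v)<\max\{l_u,l_v\}$ or $D(u,v)>h$; on every such pair \emph{every} feasible constrained ultrametric disagrees with $D$, so $\|S_D-D\|_0\le\mathrm{OPT}_{\mathrm{constr}}(D)$. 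Combining, $\|S_U-D\|_0\le\|U-S_D\|_0+\mathrm{OPT}_{\mathrm{constr}}(D)\le\rho\cdot\mathrm{OPT}_{\mathrm{ultra}}(S_D)+\mathrm{OPT}_{\mathrm{constr}}(D)\le(\rho+1)\mathrm{OPT}_{\mathrm{constr}}(D)\le 2\rho\cdot\mathrm{OPT}_{\mathrm{constr}}(D)$ since $\rho\ge 1$. The step I expect to need the most care is part (i), verifying that the two-sided clamp really does preserve the ultrametric three-point inequality across all the cases of which endpoints achieve the relevant maxima; the rest is bookkeeping that $L_0$ makes essentially trivial.
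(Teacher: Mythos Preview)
Your proposal is correct and follows the same squeeze--solve--squeeze strategy as the paper. Two minor differences worth noting: for part~(i) you verify the three-point inequality for $S_U$ by a direct case analysis on triples, whereas the paper argues inductively, first raising distances one element at a time (setting $U_r(u',\cdot)\gets\max\{U_{r-1}(u',\cdot),l_{u'}\}$) and then applying the uniform cap $h$; both arguments are fine. For part~(iii), you bound $\|S_D-D\|_0\le \mathrm{OPT}_{\mathrm{constr}}(D)$ and $\mathrm{OPT}_{\mathrm{ultra}}(S_D)\le \mathrm{OPT}_{\mathrm{constr}}(D)$ separately, obtaining $(\rho+1)\,\mathrm{OPT}_{\mathrm{constr}}(D)$; the paper instead proves the pairwise inequality $|D-S_D|^0+|S_D-\mathrm{OPT}_{D,C}|^0\le 2\,|D-\mathrm{OPT}_{D,C}|^0$ and lands directly on $2\rho$. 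Your route is a touch sharper (it gives $\rho+1$ rather than $2\rho$), but both suffice for the stated lemma.
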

\begin{proof}
Our approach starts with creating $S_D$, the squeezed $D$. Notice that if $U'$ is a constrained ultrametric, then $\|U'-S_D\|_0 \le \|U'-D\|_0$. This follows because for any $u,v$ it holds that $\max\{l_u,l_v\} \le U'(u,v) \le h$, due to $U'$ being a constrained ultrametric. Therefore $D(u,v)=U'(u,v)$ only if $\max\{l_u,l_v\} \le U'(u,v) \le h$. But in this case $S_D(u,v)=D(u,v)=U'(u,v)$.

Similarly, suppose we have an ultrametric $U$, and we create the squeezed $S_U$.

With the exact same reasoning, we have 
\begin{equation}
    \|S_D-S_U\|_0 \le \|S_D-U\|_0 \label{eq:dPrimeAndUPrime}
\end{equation}

Our solution to $L_0$ Fitting Constrained Ultrametrics is to first create $S_D$ by squeezing $D$, then obtain ultrametric $U$ by a $\rho$ approximation to $L_0$ Fitting Ultrametrics on $S_D$, and finally obtain $S_U$ by squeezing $U$.

Let $OPT_{D,C}$ be the closest constrained ultrametric to $D$, and $OPT_{S_D}$ be the closest ultrametric to $S_D$. It suffices to show that $\|D-S_U\|_0 \le 2\rho \|D-OPT_{D,C}\|_0$ and that $S_U$ is indeed an ultrametric.

By definition of $S_D$, and since $OPT_{D,C}$ is constrained, for any two elements $u,v$ it holds that
\[\min\{D(u,v),OPT_{D,C}(u,v)\}\le S_D(u,v) \le \max\{D(u,v),OPT_{D,C}(u,v)\}\]

The proof follows by a straightforward case analysis of the $3$ cases $D(u,v)\le \max\{l_u,l_v\}$, $\max\{l_u,l_v\} < D(u,v) \le h$, $h < D(u,v)$. Therefore:
\begin{itemize}
    \item either $D(u,v)=OPT_{D,C}(u,v)$, in which case 
    \[|D(u,v)-OPT_{D,C}(u,v)|^0 = 0 = |D(u,v)-S_D(u,v)|^0 + |S_D(u,v)-OPT_{D,C}(u,v)|^0\]    
    \item or $D(u,v)\ne OPT_{D,C}(u,v)$, in which case 
    \[|D(u,v)-OPT_{D,C}(u,v)|^0 = 1,  |D(u,v)-S_D(u,v)|^0 + |S_D(u,v)-OPT_{D,C}(u,v)|^0 \le 2\]
\end{itemize}

We conclude that

\begin{equation}
    |D(u,v)-S_D(u,v)|^0 + |S_D(u,v)-OPT_{D,C}(u,v)|^0 \le 2|D(u,v)-OPT_{D,C}(u,v)|^0 \label{eq:triangleEquality}
\end{equation}

We now have 

\begin{align*}
\|D-S_U\|_0 &\le \|D-S_D\|_0 + \|S_D - S_U\|_0 & \text{(triangle inequality)}\\
&\le \|D-S_D\|_0+\|S_D-U\|_0 &\text{(\ref{eq:dPrimeAndUPrime})}\\
&\le \|D-S_D\|_0+\rho\|S_D-OPT_{D,C}\|_0 &\text{(definition of $U$)}\\
\end{align*}
As $\rho\ge 1$, the latter is upper bounded by
\begin{align*}
& \rho \sum_{\{u,v\}\in \binom{V}{2}}( |D(u,v)-S_D(u,v)|^0+|S_D(u,v)-OPT_{D,C}(u,v)|^0) \\
& \le 2\rho \sum_{\{u,v\}\in \binom{V}{2}}( |D(u,v)-OPT_{D,C}(u,v)|^0) &\text{(\ref{eq:triangleEquality})}\\
&= 2\rho \|D-OPT_{D,C}\|_0
\end{align*}

Finally, we need to prove that $S_U$ inherits that it is an ultrametric. This is clear if we proceed in rounds; each round we construct a new ultrametric, and the last one will coincide with $S_U$.

More formally, let $U_0=U$. In the first $|V|$ rounds, we take out a different $u'\in V$ at a time, and let

\[{U_r(u',v)=\max\{U_{r-1}(u',v),l_{u'}\}\quad\quad \forall v\ne u'}\]

Suppose $r>0$ is the first round where $U_r$ is not an ultrametric. Then there exists a triplet $\{u,v,w\}$ such that
$U_r(u,v) > \max\{U_{r}(u,w), U_{r}(v,w)\}$.
As we only increase distances, this may only happen if $U_r(u,v)>U_{r-1}(u,v)$. But this means that 
at round $r$ we picked either $u$ or $v$ (w.l.o.g. assume it was $u$) and set $U_r(u,v)=
l_u$. However, this would also give $U_r(u,w)\ge l_u = U_r(u,v)$, contradicting $U_r(u,v) > \max\{U_{r}(u,w), U_{r}(v,w)\}$.

Finally, for $S_U$ we simply have
\[S_U(u,v)=\min\{h,U_{|V|}(u,v)\}\]
Suppose there exists a triplet $\{u,v,w\}$ that now violates the ultrametric property, then it holds that

\[S_U(u,v) > \max\{S_U(u,w),S_U(v,w)\}\]

As we did not increase any distance of $U_{|V|}$, this means that $S_U(u,w)<U_{|V|}(u,w)$ or $S_U(v,w)<U_{|V|}(v,w)$; but distances can only reduce to $h$ which is an upper bound on $S_U(u,v)$ by construction.
\end{proof}

To prove our main theorem we use the following result from \cite{agarwala}. For completeness, we provide its proof in Appendix~\ref{app}.

\begin{restatable}{lemma}{implicitAgarwala}[Implicit in the proof of Lemma~3.5 of \cite{agarwala}] \label{lem:agarwala}
Let $D$ be a distance matrix, $\alpha$ be an element, and $T$ be an $\alpha$-restricted tree metric minimizing $\|T-D\|_0$. Assuming a $\gamma$-approximation to $L_0$ Fitting Constrained Ultrametrics, we can find an $\alpha$-restricted tree $T'$ such that $\|T'-D\|_0 \le \gamma \|T-D\|_0$.
\end{restatable}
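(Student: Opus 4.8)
The plan is to reconstruct Agarwala et al.'s reduction by exhibiting an explicit, cost-preserving correspondence between $\alpha$-restricted tree metrics fitting $D$ and constrained ultrametrics fitting a transformed matrix $D'$. First I would fix the parameters of the constrained instance: set $h := 1 + \max_{u} D(\alpha,u)$, $l_u := h - D(\alpha,u)$ (so that $l_u > 0$ and $l_\alpha = h$, as required), and define $D'$ through the Gromov product at $\alpha$,
\[
D'(u,v) := h - \frac12\bigl(D(\alpha,u)+D(\alpha,v)-D(u,v)\bigr),
\]
and check that $D'$ is a legitimate (non-negative) distance matrix, distinguishing the cases where the Gromov product is negative (then $D'(u,v) > h$) and where it lies in $[0,h-1]$ (then $D'(u,v) \ge 1$).

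Next I would spell out the forward map. Given an $\alpha$-restricted tree $T$, root it at $\alpha$; for $u,v \ne \alpha$ let $m_{uv}$ be the meeting point of the $\alpha$–$u$ and $\alpha$–$v$ paths and set $U_T(u,v) := h - \mathrm{depth}(m_{uv})$, with $U_T(\alpha,v) := h$. Using the identity $T(u,v) = \mathrm{depth}(u) + \mathrm{depth}(v) - 2\,\mathrm{depth}(m_{uv})$ together with $\mathrm{depth}(u) = T(\alpha,u) = D(\alpha,u)$, I would verify three things: (i) $U_T$ is an ultrametric, since among any three points two of the meeting points coincide and lie above the third, so the three $U_T$-values have a maximum attained at least twice — exactly the three-point characterization quoted in the excerpt; (ii) $U_T$ is \emph{constrained}, because $\mathrm{depth}(m_{uv}) \le \min\{D(\alpha,u),D(\alpha,v)\}$ gives $U_T(u,v) \ge \max\{l_u,l_v\}$ while $\mathrm{depth}(m_{uv}) \ge 0$ gives $U_T(u,v) \le h$; and (iii) $T(u,v)=D(u,v) \iff U_T(u,v)=D'(u,v)$, and every pair through $\alpha$ agrees on both sides, so $\|U_T - D'\|_0 = \|T - D\|_0$.

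Then I would spell out the backward map. Given a constrained ultrametric $U$, take its canonical rooted dendrogram, placing the internal node separating $u$ and $v$ at depth $h - U(u,v)$ — well-defined, with depths non-decreasing away from the root, precisely because $U$ is an ultrametric and $U(u,v) \le h$ — and then attach each leaf $u$ at depth $D(\alpha,u)$ below its deepest internal node, which is a legitimate (non-negative-length) edge exactly because the constraint $U(u,v) \ge l_u$ forces that node to depth $\le D(\alpha,u)$. The constraint $U(\alpha,v) \ge l_\alpha = h$ (together with $U \le h$) pins $\alpha$ to the root, so the resulting tree $T_U$ is $\alpha$-restricted, and the same algebra as above yields $\|T_U - D\|_0 = \|U - D'\|_0$. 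Assembling the lemma is then immediate: the optimal $\alpha$-restricted tree $T$ maps to a constrained ultrametric, so the optimum of $(D',\alpha,h,\{l_u\})$ has cost $\le \|T-D\|_0$; a $\gamma$-approximation returns a constrained ultrametric $U$ with $\|U-D'\|_0 \le \gamma\|T-D\|_0$; mapping $U$ back gives an $\alpha$-restricted tree $T'$ with $\|T'-D\|_0 = \|U-D'\|_0 \le \gamma\|T-D\|_0$. I expect the one genuinely delicate point — rather than bookkeeping — to be the backward direction: checking that the constrained-ultrametric inequalities are \emph{exactly} the feasibility conditions for realizing the prescribed internal depths by a genuine tree with non-negative edge lengths rooted at $\alpha$; everything else reduces to the Gromov-product identity and the three-point characterization of ultrametrics.
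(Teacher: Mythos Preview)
Your proposal is correct and rests on the same idea as the paper's proof: the standard Farris/Gromov correspondence between $\alpha$-restricted tree metrics and constrained ultrametrics, with the constraints $\max\{l_u,l_v\}\le U(u,v)\le h$ encoding exactly the feasibility of the back-construction and with the cost equality $\|U-D'\|_0=\|T'-D\|_0$ being immediate from the affine relation between the two distance matrices. The paper, however, packages this differently: it uses the additive centroid quasimetric $C^\alpha(u,v)=2m_\alpha-D(\alpha,u)-D(\alpha,v)$ and sets $U=T'+C^\alpha$, $D'=D+C^\alpha$, $h=2m_\alpha$, $l_u=2m_\alpha-2D(\alpha,u)$, then invokes two black-box results from \cite{helpLemmas} (that $T$ is a tree metric iff $T+C^\alpha$ is an ultrametric, and that $U+Q$ is a tree metric iff it satisfies the triangle inequality) and checks the constraint bounds algebraically via the triangle inequality for $T'$. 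Your version is the same transform up to a factor of two and a harmless shift in $h$ (your $h=1+m_\alpha$ conveniently keeps all $l_u$ strictly positive), but you carry out the tree--ultrametric correspondence explicitly through meeting points and dendrogram depths rather than citing external lemmas. That makes your argument more self-contained, at the cost of having to verify by hand that the internal-depth assignment from $U$ realizes a genuine tree with non-negative edge lengths --- precisely the point you flagged as the delicate one, and which the paper offloads to \cite{helpLemmas}.
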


We are now ready to prove our main theorem.

\thmMain*
\begin{proof}
We iterate over all $u\in V$ nodes.
In every iteration we use Lemma~\ref{lem:agarwala} along with a $2\rho$ approximation for $L_0$ Fitting Constrained Ultrametrics (obtained by Lemma~\ref{lem:ultrametricToConstrainedUltrametric}) to obtain a tree $T_u$ with $\|T_u-D\|_0 \le 2\rho \|T_u'-D\|_0$, where $T_u'$ is the optimal $u$-restricted tree metric.
Out of all the trees $T_u$ that we obtain, we output $T$, the one that minimizes $\|T_u-D\|_0$.

Let $T_{OPT}$ be an optimal tree metric.
By Lemma~\ref{lem:modifiedReduction} there exists an element $\alpha$ such that for the $\alpha$-restricted tree $T_{OPT}^{/\alpha}$ of $T_{OPT}$ it holds that $\|T_{OPT}^{/\alpha}-D\|_0 \le 3\|T_{OPT}-D\|_0$. 
Therefore there exists an element $\alpha$ for which we have $\|T'_\alpha-D\|_0 \le 3 \|T_{OPT}-D\|_0$.

It now holds that $\|T-D\|_0 \le \|T_\alpha-D\|_0 \le 2\rho \|T'_\alpha-D\|_0 \le 6\rho \|T_{OPT}-D\|_0$.
\end{proof}

\begin{corollary}
There exists a polynomial time $O(1)$ approximation for $L_0$ Fitting Tree Metrics.
\end{corollary}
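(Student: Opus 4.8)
The plan is to derive the corollary as a direct instantiation of Theorem~\ref{thm:main}. First I would invoke the main result of Cohen-Addad et al.~\cite{vincent}: a polynomial-time algorithm that, given any instance of $L_0$ Fitting Ultrametrics, returns an ultrametric whose disagreement count is at most $\rho$ times the optimum, for some absolute constant $\rho = O(1)$. Plugging this algorithm into the black box of Theorem~\ref{thm:main} then yields a $6\rho = O(1)$ approximation for $L_0$ Fitting Tree Metrics, which is the claimed guarantee.

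The one point I would still need to verify is the running time, since Theorem~\ref{thm:main} is a reduction and its cost is inherited from its subroutines. I would unwind the construction in the proof of Theorem~\ref{thm:main}: we loop over the $|V|$ candidate distinguished elements; for each we call Lemma~\ref{lem:agarwala}, a polynomial-time reduction to $L_0$ Fitting Constrained Ultrametrics; that problem is solved within factor $2\rho$ via Lemma~\ref{lem:ultrametricToConstrainedUltrametric}, whose algorithm squeezes $D$ (a per-pair operation), makes one call to the ultrametric algorithm of \cite{vincent}, and then squeezes the returned ultrametric in at most $|V|+1$ rounds; finally we output the best of the $|V|$ candidate trees. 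Each of these steps runs in polynomial time, the squeezing operations and the $\alpha$-restricted tree construction only enlarge the instance by a bounded number of auxiliary nodes per element, and the outer loop contributes merely a linear factor, so the overall algorithm is polynomial.

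I expect no genuine mathematical obstacle here, as the corollary is purely a composition of already-established results. The only care needed is to confirm that the constant $\rho$ from \cite{vincent} is genuinely absolute, so that $6\rho$ is a constant independent of the input, and that no stage of the reduction chain inflates the instance super-polynomially; both are immediate from the explicit descriptions of the algorithms involved.
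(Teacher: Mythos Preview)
Your proposal is correct and matches the paper's own proof, which simply invokes Theorem~\ref{thm:main} with the polynomial-time $O(1)$ approximation for $L_0$ Fitting Ultrametrics from \cite{vincent}. Your additional unwinding of the running-time argument is more detailed than what the paper provides, but the approach is identical.
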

\begin{proof}
Follows immediately, by using the polynomial time $O(1)$ approximation for $L_0$ Fitting Ultrametrics from \cite{vincent}.
\end{proof}

\section{APX-Hardness} \label{sec:apx}
In this section we show that $L_0$ Fitting Tree Metrics is APX-Hard.
Assuming, for the sake of contradiction, that it is not the case, we show how to approximate Correlation Clustering (an APX-Hard problem \cite{apxCorrClust}) within any constant factor.

This is a standard reduction used for $L_p$ Fitting Tree Metrics, $p\ge 1$.
It is however further simplified for $L_0$.
That is because once we decide to move a node, our cost does not depend on the distance we moved it.

In Correlation Clustering, we are given an unweighted undirected graph $G$, and the goal is to output a partition of the vertices\footnote{To avoid confusion, we use the term vertices when we refer to Correlation Clustering, and nodes when we refer to a tree.} (clustering) such that we minimize the number of pairs of vertices connected by an edge in $G$ that are in different parts of the partition (clusters) plus the number of pairs of vertices not connected by an edge in $G$ that are in the same part of the partition.

The idea behind the reduction is the following: for every pair of vertices connected by an edge in $G=(V,E)$, we set their distance to $0$, and for every pair of vertices not connected by an edge, we set their distance to something larger ($2$ in our case).
Then we solve $L_0$ Fitting Tree Metrics.
If the output tree has a good structure (every pair of nodes associated with elements of $V$ has equal distance), it would directly correspond to a clustering.
Namely, each node associated with elements of $V$ corresponds to a cluster that contains all elements associated with it (may be more than one).
Even though we can guarantee that an optimal solution has this structure, our approximation may not.

To fix this, we introduce many more elements at distance $0$ from each other, and at distance $1$ from every element in $V$.
This has the effect of maintaining the structure of an optimal solution, while incuring a big error in every solution that does not have the desired structure.

We now formally prove our result.

\begin{theorem}
$L_0$ Fitting Tree Metrics is APX-Hard.
\end{theorem}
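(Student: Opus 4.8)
The plan is to reduce from Correlation Clustering, whose APX-Hardness is known~\cite{apxCorrClust}, following the outline sketched above and making the ``gadget'' argument precise. Given a Correlation Clustering instance $G=(V_G,E_G)$ with $n=|V_G|$, I construct a distance matrix $D$ on a set $V = V_G \cup W$, where $W$ is a set of $N$ fresh auxiliary elements with $N$ a polynomial in $n$ to be fixed later (say $N = n^3$, or more conservatively any $N \ge \binom{n}{2}+1$ suffices, but a larger polynomial gives more slack). The distances are: $D(u,v)=0$ if $\{u,v\}\in E_G$; $D(u,v)=2$ if $u,v\in V_G$ and $\{u,v\}\notin E_G$; $D(w,w')=0$ for $w,w'\in W$; and $D(u,w)=1$ for all $u\in V_G$, $w\in W$.

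First I would establish the ``soundness of good structure'' direction: from any clustering $\mathcal{C}$ of $V_G$ with Correlation Clustering cost $c$, build a tree metric $T$ of $L_0$-cost exactly $c$. Concretely, take a star-like caterpillar: one central node, one leaf-node per cluster at distance $1$ from the center (so any two distinct clusters are at distance $2$), associate all of $W$ with one additional node at distance... — actually the cleanest construction is an ultrametric: put all elements in the same cluster at mutual distance $0$, elements in different $V_G$-clusters at distance $2$, and all of $W$ forming its own group at distance $1$ from everything in $V_G$ and $0$ among themselves. One checks this is an ultrametric (hence a tree metric), the $W$-$W$ and $W$-$V_G$ pairs contribute $0$ disagreements, and the $V_G$-$V_G$ pairs contribute exactly the Correlation Clustering cost of $\mathcal{C}$. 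So $\mathrm{OPT}_{L_0}(D) \le \mathrm{OPT}_{\mathrm{CC}}(G)$.

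Next, and this is the crux, I would prove the converse with a structural cleanup lemma: from any tree metric $T$ on $V$, one can extract a clustering of $V_G$ whose Correlation Clustering cost is at most $\|T-D\|_0 + (\text{small additive loss that is absorbed})$, provided $\|T-D\|_0$ is not too large. The key observation is that since $\mathrm{OPT}_{L_0}(D) \le \binom{n}{2}$, an $(1+\varepsilon)$-approximation $T$ has $\|T-D\|_0 \le 2\binom{n}{2} \ll N$, so $T$ agrees with $D$ on the vast majority of $V_G$-$W$ pairs and $W$-$W$ pairs. From $T$ I define the clustering of $V_G$ by: declare $u\sim v$ iff $T(u,v)$ ``looks like an edge'', e.g. iff $T(u,v) < 1$; I must check this is (essentially) an equivalence relation after the cleanup. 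The argument: the $W$ elements act as anchors — for most $w\in W$, $T(u,w)=D(u,w)=1$ for most $u\in V_G$; pick one ``good'' anchor $w_0$ (exists by averaging, since few pairs are wrong); then $T(u,v)\le \max\{T(u,w_0),T(v,w_0)\}$-type tree-metric / four-point inequalities force the pairwise distances among the $V_G$-elements that see $w_0$ correctly to be consistent with a partition into distance-$0$-blocks at mutual distance $2$. Each disagreement of the derived clustering can be charged to a disagreement of $T$ against $D$ (on a $V_G$-$V_G$ pair), plus the ``bad'' $V_G$ elements (those not seeing $w_0$ correctly, whose count is $O(\|T-D\|_0/N)$ by averaging — negligible). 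Thus $\mathrm{cost}_{\mathrm{CC}}(\mathcal{C}) \le \|T-D\|_0 + o(\mathrm{OPT}_{\mathrm{CC}})$, and since $\mathrm{OPT}_{\mathrm{CC}}(G) = \Omega(1)$ whenever $G$ is not trivially clusterable (WLOG in the APX-Hardness instances), an $\alpha$-approximation for $L_0$ Fitting Tree Metrics yields an $(\alpha+o(1))$-approximation for Correlation Clustering. A PTAS for the former would then give, for every constant $\varepsilon>0$, a $(1+\varepsilon)$-approximation for Correlation Clustering, contradicting its APX-Hardness.

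The main obstacle I anticipate is the structural cleanup step: rigorously arguing that a near-optimal (not optimal) tree metric $T$, after conditioning on a good anchor $w_0\in W$, induces a genuine partition of $V_G$ with cost charged one-to-one to $T$'s disagreements, and quantifying that the ``bad'' $V_G$-elements and the non-transitive exceptions cost only $o(\mathrm{OPT}_{\mathrm{CC}})$ — this is where the choice of $N$ as a sufficiently large polynomial in $n$ does the heavy lifting (the $W$-anchors must be numerous enough that corrupting the structure of $T$ on $V_G$ is always cheaper than corrupting the $W$-relationships). Everything else — constructing $D$, the forward direction, and invoking the tree-metric four-point condition — is routine, which matches the paper's remark that the reduction is standard and, for $L_0$, further simplified because the penalty for ``moving a node'' is a flat $1$ independent of the displacement.
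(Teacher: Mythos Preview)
Your reduction is the same as the paper's --- Correlation Clustering on $G=(V_G,E_G)$, auxiliary anchor set $W$ with $|W|$ a bit larger than $\binom{n}{2}$ (the paper takes $|W|=2\binom{n}{2}$), and distances $0/2$ on $V_G$, $0$ on $W$, $1$ across --- and your forward direction matches. The only real difference is the extraction step. You pick a single good anchor $w_0\in W$ by averaging, then use the four-point condition (with all of $V_G$ at distance $1$ from $w_0$) to see that ``$T(u,v)<1$'' is an equivalence relation on $V_G$ and charge each clustering error to a $V_G$--$V_G$ disagreement of $T$. The paper instead argues the whole structure is \emph{forced}: if not all of $W$ sit at a single node, or if some $u\in V_G$ is not at distance exactly $1$ from that node, the tree already incurs $\ge |W|-1>(1+\varepsilon)\binom{n}{2}$ disagreements; it then contracts any internal node at depth $<1$ into its parent (which can only turn a non-$\{0,2\}$ $V_G$--$V_G$ distance into $2$, never creating a new disagreement), obtaining a star whose leaf partition has Correlation Clustering cost exactly $\|T-D\|_0$. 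Your route works but is longer and leaves you managing an additive error; the paper's structural argument eliminates that error outright and avoids the four-point computation. One small caution: for your averaging to yield \emph{zero} bad $V_G$-elements you need $|W|>(1+\varepsilon)\binom{n}{2}$, so your parenthetical ``$N\ge\binom{n}{2}+1$ suffices'' is a touch optimistic --- take $N=2\binom{n}{2}$ as the paper does (or your $n^3$) and the additive $o(\mathrm{OPT}_{\mathrm{CC}})$ term simply vanishes.
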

\begin{proof}
Let $G=(V,E)$ be the input to Correlation Clustering, and $\epsilon>0$ be a sufficiently small constant.
For the sake of contradiction we assume that $L_0$ Fitting Tree Metrics can be approximated within an $1+\epsilon$ factor.
Then we show that using this approximation, we can approximate Correlation Clustering within the same factor.

Let $V'$ be a set, disjoint from $V$, of size $|V'|= 2 \binom{|V|}{2}$.
For any two elements $u',v'\in V'$ we have $D(u',v')=0$.
For $u,v\in V$ we have that $D(u,v)=0$ if $\{u,v\} \in E$, and $2$ otherwise.
Finally, for $u\in V, u'\in V'$ we have $D(u,u')=1$. 
Let $T$ be the tree output by $L_0$ Fitting Tree Metrics on $D$.

An upper bound for the optimal value is $\binom{|V|}{2}$.
To see this, create the tree $T'$ consisting of two nodes $u_{T'}, v_{T'}$ at distance $1$.
All elements of $V'$ are associated with $u_{T'}$, and all elements of $V$ are associated with $v_{T'}$.
As the only disagreements between $T'$ and $D$ are pairs of elements of $V$, the upper bound follows.

Furthermore, any solution that does not have all elements of $V'$ associated with the same node in $T$ has cost at least $|V'|-1$.
Therefore, for sufficiently small $\epsilon$, $T$ must have all elements of $V'$ associated with the same node $v'$.
Similarly, all elements of $V$ must be associated with nodes at distance $1$ from $v'$.
In particular, this corresponds to an ultrametric, where the root node is $v'$, and all leaves are at depth $1$.
In what follows we consider this tree rooted at $v'$.

Finally, if any non-root node of $T$ is at distance less than $1$ from $v'$, we remove it by connecting all its children with its parent node. Notice that this does not increase the number of disagreements, because the distance between elements of $V$ is either $0$ or $2$.
After we can no longer remove any node, we are left with a tree $T$ with root $v'$, and children $v_1, \ldots, v_\ell$ at distance $1$ from $v'$ (thus distance $2$ from each other).
Each $v_i$ is associated with some elements from $V$.
Our solution to Correlation Clustering is the partition of $V$ induced by $v_1, \ldots, v_\ell$.

By construction of $D$, the cost of this Correlation Clustering solution is exactly equal to $\|T-D\|_0$.
Furthermore, if $C_1, \ldots, C_{\ell'}$ is the optimal solution to Correlation Clustering, we can create a tree with a root $v'$ and children $v_1, \ldots, v_{\ell'}$ such that all elements of $V'$ are associated with $v'$ and all elements of $C_i$ are associated with $v_i$.
This means that the optimal Correlation Clustering cost is an upper bound to the optimal $L_0$ Fitting Tree Metrics.

We conclude that we found an $1+\epsilon$ approximation for Correlation Clustering, contradicting its APX-Hardness.
\end{proof}

We note that the proof assumes some distances to be $0$. If we want distances to be strictly positive, we can select a sufficiently small constant $\delta$ instead of $0$. Then, we replace nodes that are associated with multiple elements with stars whose leaves all have distance $\delta$ to each other.

\bibliography{bibli}

\begin{thebibliography}{10}

\bibitem{agarwala}
Richa Agarwala, Vineet Bafna, Martin Farach, Mike Paterson, and Mikkel Thorup.
\newblock On the approximability of numerical taxonomy (fitting distances by
  tree metrics).
\newblock {\em {SIAM} J. Comput.}, 28(3):1073--1085, 1999.
\newblock Announced at SODA 1996.

\bibitem{charikar}
Nir Ailon and Moses Charikar.
\newblock Fitting tree metrics: Hierarchical clustering and phylogeny.
\newblock {\em {SIAM} J. Comput.}, 40(5):1275--1291, 2011.
\newblock Announced at FOCS 2005.

\bibitem{helpLemmas}
Jean{-}Pierre Barth{\'{e}}lemy and Alain Gu{\'{e}}noche.
\newblock {\em Trees and proximity representations}.
\newblock Wiley-Interscience series in discrete mathematics and optimization.
  Wiley, 1991.

\bibitem{ultrametricsMotivation}
Gunnar~E. Carlsson and Facundo M{\'{e}}moli.
\newblock Characterization, stability and convergence of hierarchical
  clustering methods.
\newblock {\em J. Mach. Learn. Res.}, 11:1425--1470, 2010.

\bibitem{apxCorrClust}
Moses Charikar, Venkatesan Guruswami, and Anthony Wirth.
\newblock Clustering with qualitative information.
\newblock {\em J. Comput. Syst. Sci.}, 71(3):360--383, 2005.
\newblock Announced at FOCS 2003.

\bibitem{debarati}
Vincent Cohen{-}Addad, Debarati Das, Evangelos Kipouridis, Nikos Parotsidis,
  and Mikkel Thorup.
\newblock Fitting distances by tree metrics minimizing the total error within a
  constant factor.
\newblock In {\em 62nd {IEEE} Annual Symposium on Foundations of Computer
  Science, {FOCS} 2021, Denver, CO, USA, February 7-10, 2022}, pages 468--479.
  {IEEE}, 2021.

\bibitem{vincent}
Vincent Cohen{-}Addad, Chenglin Fan, Euiwoong Lee, and Arnaud de~Mesmay.
\newblock Fitting metrics and ultrametrics with minimum disagreements.
\newblock In {\em 63rd {IEEE} Annual Symposium on Foundations of Computer
  Science, {FOCS} 2022, Denver, CO, USA, October 31 - November 3, 2022}, pages
  301--311. {IEEE}, 2022.

\bibitem{robust}
Martin Farach, Sampath Kannan, and Tandy~J. Warnow.
\newblock A robust model for finding optimal evolutionary trees.
\newblock {\em Algorithmica}, 13(1/2):155--179, 1995.
\newblock Announced at STOC 1993.

\bibitem{mcgregor}
Boulos Harb, Sampath Kannan, and Andrew McGregor.
\newblock Approximating the best-fit tree under l\({}_{\mbox{p}}\) norms.
\newblock In {\em APPROX-RANDOM}, pages 123--133, 2005.

\bibitem{exact}
M.S. Waterman, T.F. Smith, M.~Singh, and W.A. Beyer.
\newblock Additive evolutionary trees.
\newblock {\em Journal of Theoretical Biology}, 64(2):199--213, 1977.

\end{thebibliography}

\appendix
\section{From restricted tree metrics to constrained ultrametrics} 
\label{app}
In this section, we show how to use a $\gamma$ approximation to $L_0$ Fitting Constrained Ultrametircs in order to find a $\gamma$ approximation to the optimal $\alpha$-restricted tree metric. We include this section for completeness, but the results follow directly from \cite{agarwala}.

We say that a distance matrix $Q$ is a centroid quasimetric if $\exists l_1, l_2, \ldots, l_n$ such that $Q(u,v)=\frac{l_u+l_v}{2}$. Notice that $l_u$ may even be negative, for any $u$.

Let $D$ be a distance matrix, $\alpha$ be a distinguished element, $m_\alpha = max_{u\in V\setminus \{\alpha\}} D(a,u)$, and $C^{\alpha}$ be a distance matrix such that $C^{\alpha}(u,v) = 2m_\alpha - D(a,u) - D(a,v)$. Notice that $C^{\alpha}$ is a centroid quasimetric by setting $l_u=2m_\alpha - 2D(a,u)$.
We say that $C^\alpha$ is the $(D,\alpha)$ centroid metric. We have the following two results.

\begin{lemma}[\cite{helpLemmas}, Th.3.2]\label{lem:helpLemma1}
For any element $\alpha$, $T$ is a tree metric if and only if $T + C^\alpha$ is an ultrametric.
\end{lemma}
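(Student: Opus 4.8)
The final statement to prove is Lemma~\ref{lem:helpLemma1} (cited as \cite{helpLemmas}, Th.3.2): for any element $\alpha$, $T$ is a tree metric if and only if $T + C^\alpha$ is an ultrametric.

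The plan is to use the two standard combinatorial characterizations: $T$ is a tree metric iff it satisfies the \emph{four-point condition}, and $U$ is an ultrametric iff it satisfies the \emph{three-point (strong triangle) condition} $U(u,v)\le\max\{U(u,w),U(v,w)\}$ for all triples, which is exactly the characterization already quoted in the excerpt. First I would recall the four-point condition: $T$ is a tree metric iff for every four elements $w,x,y,z$, among the three sums $T(w,x)+T(y,z)$, $T(w,y)+T(x,z)$, $T(w,z)+T(x,y)$ the largest two are equal. The key algebraic observation is that adding the centroid quasimetric $C^\alpha$, with $C^\alpha(u,v)=2m_\alpha-D(\alpha,u)-D(\alpha,v)=\tfrac{l_u+l_v}{2}$, shifts each pairwise sum $T(w,x)+T(y,z)$ by the \emph{same} amount $\tfrac{l_w+l_x+l_y+l_z}{2}$ (indeed $4m_\alpha$ minus the sum of the four terms $D(\alpha,\cdot)$), since each of the four points appears exactly once on each side of a four-point split. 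Hence $T$ satisfies the four-point condition iff $T+C^\alpha$ does.

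Next I would show that on quadruples where $\alpha$ is one of the four points, the four-point condition for $T+C^\alpha$ is automatically satisfied and in fact degenerates into the three-point condition on the remaining three points. Concretely, evaluate the three sums for the split involving $\alpha,x,y,z$: writing $M:=T(\alpha,x)+C^\alpha(\alpha,x)$ etc., one computes $C^\alpha(\alpha,u)=2m_\alpha-D(\alpha,\alpha)-D(\alpha,u)=2m_\alpha-D(\alpha,u)$ — wait, more carefully, since we only need the \emph{structure}: because $C^\alpha$ is a centroid quasimetric, $(T+C^\alpha)$ restricted to $\{\alpha\}\cup\{x,y,z\}$ has the property that the three four-point sums reduce, after cancellation of the common shift, to comparing $U(x,y)+\text{const}$, $U(x,z)+\text{const}$, $U(y,z)+\text{const}$ where the constant absorbs all $\alpha$-terms; equivalently the four-point condition here becomes "the largest two of $U(x,y),U(x,z),U(y,z)$ are equal", i.e. the three-point/ultrametric condition on $\{x,y,z\}$. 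So: $T+C^\alpha$ is a metric satisfying the four-point condition $\iff$ $T+C^\alpha$ satisfies the four-point condition on all quadruples $\iff$ it satisfies it on quadruples avoiding $\alpha$ \emph{and} on those containing $\alpha$ $\iff$ (by the shift argument, the former is equivalent to $T$'s four-point condition there) combined with the three-point condition on all triples. A cleaner route: just prove directly that $T+C^\alpha$ satisfies the three-point (ultrametric) condition on \emph{every} triple $\{u,v,w\}$ iff $T$ satisfies the four-point condition on $\{u,v,w,\alpha\}$, using $C^\alpha(u,v)=2m_\alpha-D(\alpha,u)-D(\alpha,v)$ and noting $T(\alpha,u)=D(\alpha,u)$ is \emph{not} assumed here — so instead use that the ultrametric condition $U(u,v)\le\max\{U(u,w),U(v,w)\}$ expands to $T(u,v)-D(\alpha,u)-D(\alpha,v) \le \max\{T(u,w)-D(\alpha,u)-D(\alpha,w),\,T(v,w)-D(\alpha,v)-D(\alpha,w)\}$, i.e. $T(u,v)+T(\alpha,w)'\le\max\{\dots\}$ after adding $D(\alpha,u)+D(\alpha,v)+D(\alpha,w)$ to both sides — which is precisely one of the three four-point inequalities on the split $\{u,v\}$ vs $\{\alpha,w\}$ once one replaces the $D(\alpha,\cdot)$ terms appropriately. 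I would then check that ranging over all triples and all three cyclic choices recovers exactly the full four-point condition on all quadruples containing $\alpha$, and that quadruples \emph{not} containing $\alpha$ add no new constraint (they follow from the ones with $\alpha$ — a known fact, since a centroid-shift links every four-point split to $\alpha$).

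The main obstacle I expect is bookkeeping the two directions cleanly and making sure non-negativity / the metric axioms carry through: one direction needs "$T$ tree metric $\Rightarrow$ $T+C^\alpha$ ultrametric", where I must verify $T+C^\alpha$ is itself a genuine (non-negative, symmetric) distance function with the three-point property — non-negativity follows since $T(u,v)\le T(\alpha,u)+T(\alpha,v)$ by the triangle inequality in the tree metric $T$ while $C^\alpha(u,v)=2m_\alpha-D(\alpha,u)-D(\alpha,v)\ge 2m_\alpha-m_\alpha-m_\alpha=0$, but one must be careful that the \emph{sum} is non-negative and that $D(\alpha,\cdot)=T(\alpha,\cdot)$ is or is not being used (the lemma as stated has $C^\alpha$ built from $D$, not $T$, so for this lemma to be the clean "iff" it is presumably applied with $T$ that already agrees with $D$ on $\alpha$, or $C^\alpha$ should be read as built from $T$'s $\alpha$-distances — I would state the convention explicitly at the start and then the arithmetic is a short expansion). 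The reverse direction, "$T+C^\alpha$ ultrametric $\Rightarrow$ $T$ tree metric", runs the same expansion backwards, recovering the four-point condition on every quadruple through $\alpha$ and hence (by the centroid linkage) on all quadruples, so $T$ satisfies the four-point condition and is a tree metric. I would present this as: state both characterizations, do the one-line identity $C^\alpha(u,v)+C^\alpha(w,z)$ is split-independent, translate the three-point inequality into a four-point inequality via the substitution above, and conclude.
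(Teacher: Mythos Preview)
The paper does not prove this lemma at all: it is quoted as an external result (\cite{helpLemmas}, Th.~3.2) and used as a black box in the proof of Lemma~\ref{lem:agarwala}. There is therefore no in-paper argument to compare your proposal against.

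On its own merits, your outline is the standard argument and is essentially correct. The decisive identity is exactly the one you isolate in your ``cleaner route'': for a triple $\{u,v,w\}$, the difference $(T+C^\alpha)(u,v)-(T+C^\alpha)(u,w)$ equals $[T(u,v)+T(\alpha,w)]-[T(u,w)+T(\alpha,v)]$ provided $C^\alpha$ is built from $T$'s $\alpha$-distances, so the ultrametric three-point condition on $T+C^\alpha$ for $\{u,v,w\}$ is literally the four-point condition on $T$ for $\{u,v,w,\alpha\}$. You also correctly flag the $D$-versus-$T$ ambiguity in the paper's statement; the paper in fact resolves it only at the point of use, writing ``As $T'$ is $\alpha$-restricted, the $(D,\alpha)$ centroid metric is the same as the $(T',\alpha)$ centroid metric'' before invoking the lemma. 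State that convention at the outset rather than midstream.

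Two places to tighten. First, your opening plan (examining the four-point condition on $T+C^\alpha$ for quadruples containing $\alpha$) is a detour, since the target is the \emph{three}-point condition on $T+C^\alpha$; drop it and go straight to the cleaner route. Second, for the reverse direction you need that the four-point condition on all $\alpha$-containing quadruples forces it on \emph{all} quadruples; you call this ``a known fact'' but it deserves one line: $U$ ultrametric $\Rightarrow$ $U$ is itself a tree metric $\Rightarrow$ $U$ satisfies the four-point condition everywhere $\Rightarrow$ so does $T=U-C^\alpha$ by the centroid-shift invariance you already established. Also handle triples containing $\alpha$ explicitly (they reduce to the triangle inequality $T(u,v)\le T(\alpha,u)+T(\alpha,v)$, since $U(\alpha,\cdot)\equiv 2m_\alpha$), and in the reverse direction check nonnegativity of $T$, not just of $T+C^\alpha$.
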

\begin{lemma}[\cite{helpLemmas}, Cor.3.3]\label{lem:helpLemma2}
Given a tree metric $U$ and a centroid quasimetric $Q$, it holds that $U+Q$ is a tree metric if and only if $U+Q$ satisfies the triangle inequality.
\end{lemma}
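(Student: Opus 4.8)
The plan is to handle both directions through the classical four-point characterization of tree metrics (Buneman's theorem), exploiting the one distinctive feature of a centroid quasimetric: it adds the \emph{same} amount to every one of the three pairings that occur in the four-point condition. The ``only if'' direction is immediate, since a tree metric is in particular a metric and hence satisfies the triangle inequality.

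For the ``if'' direction, assume $d:=U+Q$ satisfies the triangle inequality; I want to conclude that $d$ is a tree metric. First note that $d$ is symmetric with zero diagonal, hence (by the triangle inequality applied to the degenerate triple $(u,v,u)$) non-negative, so $d$ is a metric and it remains only to verify the four-point condition: for all distinct $u,v,w,x$, the two largest of the three sums $d(u,v)+d(w,x)$, $d(u,w)+d(v,x)$, $d(u,x)+d(v,w)$ coincide. The crucial computation is that for a centroid quasimetric $Q(a,b)=\frac{l_a+l_b}{2}$ one has
\[
Q(u,v)+Q(w,x)=Q(u,w)+Q(v,x)=Q(u,x)+Q(v,w)=\frac{l_u+l_v+l_w+l_x}{2}=:\sigma,
\]
so each of the three sums for $d$ is obtained from the corresponding sum for $U$ by adding the single constant $\sigma$. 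Consequently the three $d$-sums have exactly the same relative order as the three $U$-sums, and in particular ``the two largest are equal'' holds for $d$ on $\{u,v,w,x\}$ because it holds for $U$ there. Since $U$ is a tree metric it satisfies the four-point condition, hence so does $d$; a metric satisfying the four-point condition is a tree metric, and we are done.

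The only point that needs care — and, I believe, the reason the triangle-inequality hypothesis is present at all — is the handling of degenerate quadruples. The ``shift by $\sigma$'' step transfers the four-point condition only for \emph{distinct} $u,v,w,x$: as soon as two points coincide, one of the $Q$-entries needed lies on the diagonal, where we take the value $0$ rather than $\frac{l_a+l_a}{2}$, and the cancellation that produces the common $\sigma$ breaks. But the repeated-point instances of the four-point condition are precisely the triangle inequalities, so assuming them separately fills exactly this gap, and the two ingredients together deliver the full characterization. I do not expect any further obstacle; the argument is essentially a one-line observation about centroid quasimetrics combined with Buneman's theorem.
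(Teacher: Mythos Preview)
The paper does not actually prove this lemma: it is quoted verbatim as Corollary~3.3 of the cited reference and then used as a black box inside the proof of Lemma~\ref{lem:agarwala}. So there is no ``paper's own proof'' to compare against.

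That said, your argument is correct and is essentially the standard one. The key observation is exactly what you isolate: for a centroid quasimetric the three pairwise sums over any four \emph{distinct} points all equal $\frac{l_u+l_v+l_w+l_x}{2}$, so adding $Q$ shifts each of the three sums in the four-point condition by the same constant and therefore preserves the ``two largest are equal'' property inherited from $U$. Your identification of why the triangle-inequality hypothesis is needed is also right: the diagonal convention $Q(u,u)=0$ (rather than $l_u$) breaks the uniform shift precisely when a point is repeated, and the four-point condition with a repeated point is nothing but the triangle inequality. Combining a (pseudo)metric with the four-point condition on distinct quadruples yields a tree metric by Buneman's characterization, which is compatible with the paper's convention that tree metrics may have zero distances (identify points at distance~$0$). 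The ``only if'' direction is, as you say, immediate.
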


We are now ready to prove the claim.

\implicitAgarwala*
\begin{proof}
Let $m_\alpha = max_{u\in V\setminus \{\alpha\}} D(a,u)$. Let $C^{\alpha}$ be a distance matrix such that $C^{\alpha}(u,v) = 2m_\alpha - D(a,u) - D(a,v)$. Furthermore we define the $L_0$ Fitting Constrained Ultrametrics problem with distance matrix $D'=D+C^{\alpha}, h=2m_\alpha, l_u=2m_a-2D(a,u)$ for all $u$. 

We now argue that there is an one to one correspondence between constrained ultrametrics and $\alpha$-restricted tree metrics, such that if constrained ultrametric $U$ correponds to tree $T'$, then $\|U-D'\|_0 = \|T'-D\|_0$. This directly proves the lemma's claim.

We start with the correspondence. Given a constrained ultrametric $U$, its corresponding $\alpha$-restricted tree is $U-C^\alpha$; similarly given an $\alpha$-restricted tree $T'$, its corresponding constrained ultrametric is $T'+C^{\alpha}$. Assuming this correspondence is valid, notice that $\|U-D'\|_0 = \|(T'+C^\alpha) - (D+C^\alpha)\|_0 = \|T'-D\|_0$.

We now prove that if $U$ is a constrained ultrametric, then $T'=U-C^\alpha$ is an $\alpha$-restricted tree metric. Notice that for any element $u\ne \alpha$, we have $U(\alpha,u)=h=2m_\alpha$, by $U$ being constrained. Therefore $T'(\alpha,u) = U(\alpha,u)-C^\alpha(\alpha,u) = 2m_\alpha - (2m_\alpha-D(\alpha, \alpha) - D(\alpha,u)) = D(\alpha,u)$, meaning that $T'$ is $\alpha$-restricted. To prove it is a tree metric, we need to prove that it satisfies triangle inequality, by Lemma~\ref{lem:helpLemma2}. Let $u,v,w$ be distinct elements.

\begin{align*}
    T'(u,v) &\le T'(u,w) + T'(v,w) &\iff\\
    U(u,v)-C^\alpha(u,v) &\le U(u,w)-C^\alpha(u,w) + U(v,w)-C^\alpha(v,w) &\iff\\
    U(u,v) &\le U(u,w) + U(v,w) - l_w &\iff\\
    U(u,v) &\le \max\{U(u,w), U(v,w)\} + \min\{U(u,w), U(v,w)\} - l_w
\end{align*}
But $U(u,v) \le \max\{U(u,w), U(v,w)\}$ since $U$ is an ultrametric, and $\min\{U(u,w), U(v,w)\} \ge l_w$ since $U$ is constrained. Therefore $T$ satisfies triangle inequality, and by Lemma~\ref{lem:helpLemma2} it is a tree metric.

Finally we prove that if $T'$ is an $\alpha$-restricted tree metric, then $U=T'+C^\alpha$ is a constrained ultrametric.
As $T'$ is $\alpha$-restricted, the $(D,\alpha)$ centroid metric is the same as the $(T',\alpha)$ centroid metric. 
Thus, by Lemma~\ref{lem:helpLemma1} $U$ is an ultrametric.
To show that it is constrained, notice that $l_\alpha = 2m_\alpha - 2D(\alpha, \alpha) = 2m_\alpha = h$. This also implies that for all $u$ we have $U(\alpha, u) = 2m_\alpha$.

Furthermore, $T'$ is a (tree) metric and therefore it satisfies triangle inequality. Let $u,v\ne \alpha$ be distinct elements.

\begin{align*}
    T'(\alpha, v) &\le T'(u,v) + T'(\alpha,u) &\implies \\
    U(\alpha, v) - C^\alpha(\alpha,v) &\le U(u, v) - C^\alpha(u,v) + U(\alpha, u) - C^\alpha(\alpha,u) &\implies \\
    U(\alpha, v) &\le U(u,v) + U(\alpha, u) - l_u &\implies\\
    2m_\alpha &\le U(u,v) + 2m_\alpha - l_u &\implies\\
    U(u,v) &\ge l_u
\end{align*}
By symmetry also $U(u,v) \ge l_v$, meaning that $U(u,v) \ge \max\{l_u,l_v\}$.

Finally
\begin{align*}
U(u,v) &= T'(u,v) + 2m_\alpha - D(\alpha,u) - D(\alpha,v)\\
&\le T'(\alpha, u) + T'(\alpha,v) + 2m_\alpha - D(\alpha,u) - D(\alpha,v) \\
&= D(\alpha, u) + D(\alpha,v) + 2m_\alpha - D(\alpha,u) - D(\alpha,v) \\
&= 2m_\alpha
\end{align*}

\end{proof}

\end{document}